\newtheorem{theorem}{Theorem}
\newtheorem{definition}{Definition}[section]
\newtheorem{lemma}[theorem]{Lemma}
\newtheorem{remark}[theorem]{Remark}
\pgfplotsset{compat=1.14}
\begin{document}
\title{Stochastic geometric mechanics with diffeomorphisms\\ \bigskip\Large
Darryl D. Holm and Erwin Luesink 
\\ \bigskip\small
Department of Mathematics, Imperial College London SW7 2AZ, UK\\
email: d.holm@ic.ac.uk,\ e.luesink16@imperial.ac.uk
}
\date{}                                           

\maketitle

\makeatother

\begin{abstract}
Noether's celebrated theorem associating symmetry and conservation laws in classical field theory is adapted to allow for broken symmetry in geometric mechanics and is shown to play a central role in deriving and understanding the generation of fluid circulation via the Kelvin-Noether theorem for ideal fluids with stochastic advection by Lie transport (SALT).
\end{abstract}

\section{Noether's theorem in geometric mechanics}
\subsection{Euler-Poincar\'e reduction}
Geometric mechanics deals with group-invariant variational principles. In this setting, Noether's theorem \cite{noether1918invariante, noether1971invariant} plays a key role. Given the tangent lift action $G\times TM\to TM$ of a Lie group $G$ on the tangent bundle $TM$ of a manifold $M$\footnote{$M$ is called the \emph{configuration manifold} in classical mechanics.} on which $G$ acts transitively, Noether's theorem states that each Lie symmetry of a Lagrangian $L:TM\to \mathbb{R}$ defined  in the action integral $S=\int L(q,v)dt$ for Hamilton's variational principle $\delta S = 0$ with $(q,v)\in TM$ implies a conserved quantity for the corresponding Euler-Lagrange equations defined on the cotangent bundle $T^*M$.  
The conserved quantities arising from Noether's theorem in the case where the configuration manifold $M$ is a Lie group $G$ were studied by Smale, in \cite{smale1970topologya, smale1970topologyb}, where it was shown that the reduction procedure $TG\to TG\setminus G\simeq\mathfrak{g}$ leads to dynamics which take place on the dual $\mathfrak{g}^*$ of the Lie algebra $\mathfrak{g}$. The dynamical variable $m\in\mathfrak{g}^*$ in the dual Lie algebra is now called the momentum map (Smale called it angular momentum). In general, the configuration manifold $M$ is not a Lie group. However, when a Lie group $G$ acts transitively on a configuration manifold $M$ the proof of Noether's theorem induces a cotangent-lift momentum map $J: T^*M\to\mathfrak{g}^*$. The momentum map induced this way is an infinitesimally equivariant Poisson map taking functions on the cotangent bundle $T^*M$ of $M$ to the dual Lie algebra $\mathfrak{g}^*$ of the Lie group $G$. The momentum map $J: T^*M\to\mathfrak{g}^*$ is equivariant and Poisson, even if $G$ is not a Lie symmetry of the Lagrangian in Hamilton's principle.  Momentum maps naturally lead from the Lagrangian side to the Hamiltonian side. The Hamiltonian dynamics on $T^*M$ involves symplectic transformations.  However, as we shall discuss below, for the class of Hamiltonians which can be defined as $H\circ J: \mathfrak{g}^*\to \mathbb{R}$, the momentum map induces Euler-Poincar\'e motion on the Lagrangian side and Lie-Poisson motion on the Hamiltonian side. To illustrate these remarks, we return to the situation in which the configuration manifold, $M$, is a Lie group, $G$.

For hyperregular Lagrangians, the Legendre transform to the Hamiltonian side is invertible and one may reconstruct the solution on $G$ from its representation on $T^*G\setminus G\simeq\mathfrak{g}^*$. In that case, solving the equations describing the evolution of the momentum map on the dual Lie algebra $\mathfrak{g}^*$ is equivalent to solving the equations on the  cotangent bundle $T^*G$ when the configuration manifold is $G$. When the Lie group $G$ acts transitively, freely and properly on the configuration manifold $M$, then one may reconstruct the solution on $M$ from its representation on $T^*G\setminus G\simeq\mathfrak{g}^*$. The last statement is proven for finite-dimensional Lie groups $G$ in, e.g., \cite{abraham1978foundations}. 

The Lie-group reduced equations defined on the dual Lie algebra $\mathfrak{g}^*$ via Smale's  procedure of reduction by symmetry $T^*G\setminus G\simeq\mathfrak{g}^*$ are called Euler-Poincar\'e equations after \cite{poincare1901forme}. Provided the Lagrangian is hyperregular, the Euler-Poincar\'e reduction procedure can be expressed in terms of the cube of linked commutative diagrams shown in figure \ref{fig:cube}. 
\begin{figure}[H]
\small
\centering
\begin{tikzcd}[row sep=3em, column sep=small]
& 
L:TG\to\mathbb{R} \arrow[dl] \arrow[rr,  "\text{Legendre transform}", leftrightarrow] \arrow[dd] 
& 
& 
H:T^*G\to\mathbb{R} \arrow[dl] \arrow[dd]
\\
\text{Euler-Lagrange eqns} \arrow[rr, crossing over, Leftrightarrow] 
& 
& \text{Hamilton's eqns}
\\
& \ell:\mathfrak{g}\to\mathbb{R} \arrow[dl] \arrow[rr, "\text{Legendre \hspace{0.25cm} transform}", leftrightarrow] 
& 
& \hslash:\mathfrak{g}^*\to\mathbb{R} \arrow[dl]
\\
\text{Euler-Poincar\'e eqns} \arrow[rr, Leftrightarrow] \arrow[from=uu, crossing over]
& 
& \text{Lie-Poisson eqns} \arrow[from=uu, crossing over]
\end{tikzcd}
\caption{The cube of commutative diagrams for geometric mechanics on Lie groups. Euler-Poincar\'e reduction (on the left side) and Lie-Poisson reduction (on the right side) are both indicated by the arrows pointing down. The diagrams are all commutative, provided the Legendre transformation and reduced Legendre transformation are both invertible.}
\label{fig:cube}
\end{figure}
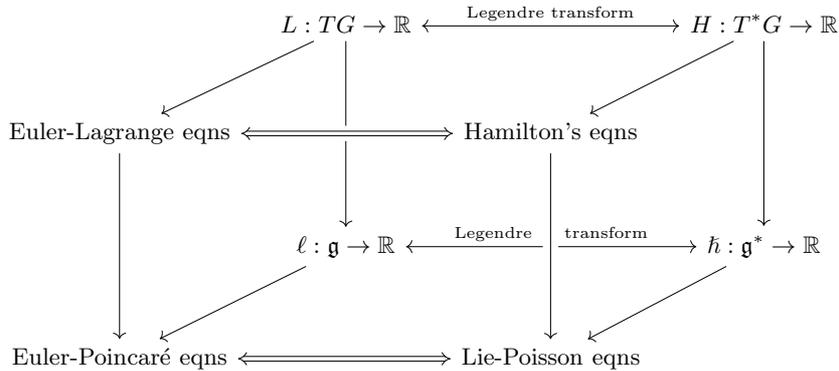
To summarise the notation in figure \ref{fig:cube}, $G$ denotes the  configuration manifold which is assumed to be isomorphic to a Lie group, $TG$ is the tangent bundle, $T^*G$ is the cotangent bundle, $TG\setminus G \simeq \mathfrak{g}$ is the Lie algebra and $T^*G\setminus G\simeq \mathfrak{g}^*$ is the dual of the Lie algebra. The Lagrangian is a functional $L:TG\to\mathbb{R}$ and the Hamiltonian is a functional $H:T^*G\to\mathbb{R}$. Euler-Poincar\'e reduction takes advantage of Lie group symmetries to transform the Lagrangian and Hamiltonian into group-invariant variables, which leads to a reduced Lagrangian $\ell:\mathfrak{g}\to\mathbb{R}$ and a reduced Hamiltonian $\hslash:\mathfrak{g}^*\to\mathbb{R}$. The diagram comprising the face of the cube involving these functionals in figure \ref{fig:cube} commutes if the Legendre transform is a diffeomorphism. This is guaranteed if the Lagrangian or Hamiltonian is hyperregular. The Euler-Lagrange equations and Hamilton's equations are related via a change of variables, which also holds for the Euler-Poincar\'e equations and the Lie-Poisson equations. Many finite dimensional mechanical systems may be described naturally in this framework. The classic example is the rotating rigid body, discussed from the viewpoint of symmetry reduction by Poincar\'e in \cite{poincare1901forme}. In his 1901 paper, Poincar\'e also raised the issue of \emph{symmetry breaking}, by introducing the vertical acceleration of gravity, which breaks the  $SO(3)$  symmetry for free rotation and restricts it to  $SO(2)$ for rotations about the vertical axis.

Stochasticity may also be included in the framework of Euler-Poincar\'e reduction by symmetry. The first attempt to include noise consistently in finite-dimensional symplectic Hamiltonian  mechanics was by \cite{bismut1982mecanique} and reduction by symmetry of stochastic systems was studied by \cite{lazaro2008stochastic}.  

\paragraph{Plan of the paper.}
In the present work, we will review Euler-Poincar\'e reduction of stochastic infinite dimensional variational systems with symmetry breaking.  The infinite dimensional case is interesting because it is the natural setting for fluid dynamics, quantum mechanics and elasticity. The foundations of the finite dimensional stochastic geometric mechanics are established in \citet{cruzeiro2018momentum}. We will explore the infinite dimensional case in context of fluid dynamics, where symmetry under the smooth invertible maps of the flow domain is broken by the spatial dependence of the initial mass density. 

\subsection{Sobolev class diffeomorphisms}
Consider an $n$-dimensional compact and oriented smooth manifold $M$, equipped with a Riemannian metric $\langle\,\cdot\,,\,\cdot\,\rangle$. This will be the spatial domain of flow and $X\in M$ will denote the initial position of any given fluid particle. The manifold $M$ is acted upon by a group of Sobolev class diffeomorphisms. In \cite{ebin1970groups} it is shown that the space of $C^\infty$ diffeomorphisms, defined by $\mathfrak{D}=\{g\in C^\infty(M,M)|\,g\text{ is bijective and } g^{-1}\in C^\infty(M,M)\}$, is not the convenient setting to study fluid dynamics, but that one should use $\mathfrak{D}^s = \{g\in H^s(M,M)|\,g\text{ is bijective and } g^{-1}\in H^s(M,M)\}$, the space of Sobolev class diffeomorphisms with $s$ weak derivatives. The reason for this choice is that the smooth diffeomorphisms constitute a Fr\'echet manifold for which there is no inverse or implicit function theorem and no general solution theorem for ordinary differential equations. Each of these latter features would prohibit the study of geodesics. 

The space of Sobolev class diffeomorphisms is both a Hilbert manifold and a topological group if $s>n/2+1$, as was shown by \cite{ebin1967space}. The Hilbert manifold structure implies the existence of function inverses and the implicit function theorem, as well as the existence of a general solution theorem for ordinary differential equations. This additional structure also implies that one can construct the tangent space of $\mathfrak{D}^s$ in the usual way and study geodesics. The space $\mathfrak{D}^s$ is the configuration space for continuum mechanics and each $g\in\mathfrak{D}^s$ is called a configuration. A fluid trajectory starting from $X\in M$ at time $t=0$ is given by $x(t)=g_t(X)=g(X,t)$, with $\mathfrak{D}^s\ni g:M\times\mathbb{R}^+\to M$ being a continuous one-parameter subgroup of $\mathfrak{D}^s$. In the deterministic case, computing the time derivative of this one-parameter subgroup gives rise to the \emph{reconstruction equation}, given by
\begin{equation}
\frac{\partial}{\partial t}g_t(X) = u(g_t(X),t),
\label{eq:reconstructiondeterministic}
\end{equation}
where $u_t(\,\cdot\,)=u(\,\cdot\,,t)\in \mathfrak{X}^s$ is a time dependent vector field with flow $g_t(\,\cdot\,)=g(\,\cdot\,,t)$. The initial data is given by $g(X,0)=X$. Here $\mathfrak{X}^s=H^s(TM)$ denotes the space of Sobolev class vector fields on $M$, which is also the Lie algebra associated to the Sobolev class diffeomorphisms. 

\subsection{Stochastic advection by Lie transport (SALT)}
In the setting of stochastic advection by Lie transport (SALT), which was introduced by \cite{holm2015variational}, the deterministic reconstruction equation in \eqref{eq:reconstructiondeterministic} is replaced by the semimartingale
\begin{equation}
{\sf d}g(X,t) = u(g_t(X),t)dt + \sum_{i=1}^M \xi_i(g_t(X))\circ dW_t^i,
\label{eq:reconstructionstochastic}
\end{equation}
where the symbol $\circ$ means that the stochastic integral is taken in the Stratonovich sense. The initial data is given by $g(X,0)=X$. The $W_t^i$ are independent, identically distributed Brownian motions, defined with respect to the standard stochastic basis $(\Omega,\mathcal{F},(\mathcal{F}_t)_{t\geq 0},\mathbb{P})$. Such a noise was shown to arise from a multi-time homogenisation argument in \cite{cotter2017stochastic}. The $\xi_i(\,\cdot\,)\in\mathfrak{X}^s$ are called data vector fields and are prescribed. These data vector fields represent the effects of unresolved degrees of freedom on the resolved scales of fluid motion and account for unrepresented processes. They are determined by applying empirical orthogonal function analysis to appropriate numerical and/or observational data. For instance, for an application to the two dimensional Euler equations for an ideal fluid, see \cite{cotter2019numerically} and for an application to a two-layer quasi-geostrophic model, see \cite{cotter2018modelling}. Stochastic models enable the use of a variety of methods in data assimilation, which are discussed in \cite{cotter2019particle}. It is not difficult to make sense of \eqref{eq:reconstructiondeterministic}, but understanding \eqref{eq:reconstructionstochastic} is more complicated. In \cite{de2020implications}, a \emph{stochastic chain rule} is shown to exist. This stochastic chain rule is called the \emph{Kunita-It\^o-Wentzell (KIW) formula} and helps interpret the semimartingale in \eqref{eq:reconstructionstochastic}. The KIW formula will also be used later to prove the stochastic Kelvin circulation theorem. First, however, the space $\mathfrak{D}^s$ needs to be given more structure.

The space $\mathfrak{D}^s$ inherits a \emph{weak Riemannian structure} from the underlying manifold $M$ in a natural way. For $g\in \mathfrak{D}^s$ and $V,W\in T_g\mathfrak{D}^s$, one can define the following bilinear form
\begin{equation}
(V,W)=\int_M \langle(V(X),W(X)\rangle_{g(X)}\mu(dX),
\label{eq:weakriemannian}
\end{equation}
where $\mu$ is the volume form on $M$ induced by the metric. The Riemannian structure induced by \eqref{eq:weakriemannian} is weak because the topology is of type $L^2$, which is strictly weaker than the $H^s$ topology. This bilinear form is a linear functional on the Hilbert space $T_g\mathfrak{D}^s$ and can be used to define the dual space $T_g^*\mathfrak{D}^s$. The pairing between $V\in T_g\mathfrak{D}^s$ and $\alpha\in T_g^*\mathfrak{D}^s$ is given by 
\begin{equation}
\langle \alpha,V \rangle = \int_M \alpha(X)\cdot V(X).
\label{eq:dualitypairing}
\end{equation}
Hence the metric on $M$ and the volume form $\mu(dX)$ can be used to construct the isomorphism between $T\mathfrak{D}^s$ and $T^*\mathfrak{D}^s$ as $V(X)\mapsto\alpha(X)=V^\flat(X)\mu(dX)$, where $\flat:TM\to T^*M$ is one of the musical isomorphisms that are induced by the metric on $M$. The group $\mathfrak{D}^s$ is not a Lie group; since right multiplication is smooth, but left multiplication is only continuous. Hence $\mathfrak{D}^s$ is a topological group with a weak Riemannian structure. In general, these properties are not sufficient to guarantee the existence of an exponential map. However, \cite{ebin1970groups} showed that an exponential map can exist in many important cases. In particular, they showed that the geodesic spray associated to \eqref{eq:weakriemannian} (with and without forcing) is smooth.\footnote{The \emph{geodesic spray} is the vector field whose integral curves are the geodesics.} The smoothness of the geodesic spray persists even though $H^s$ diffeomorphisms are considered rather than smooth diffeomorphisms. Combined with the existence of an exponential map, the smoothness property implies a regular interpretation of the Euler-Poincar\'e equations on $\mathfrak{D}^s$, provided that one uses right translations and right representations of the group on itself and its Lie algebra, as shown in \cite{holm1998euler}. However, due to the presence of the volume form $\mu(dm)$, the bilinear form \eqref{eq:weakriemannian} is not right-invariant under the action of the entire $H^s$ diffeomorphism group, although there is right-invariance under the action of the isotropy subgroup $\mathfrak{D}^s_\mu = \{g\in\mathfrak{D}^s | \, g_*\mu = \mu\}$. Since this subgroup is a proper subgroup, as it is smaller than $\mathfrak{D}^s$ itself. Thus, one speaks of \emph{symmetry breaking}. 

In deriving the equations of ideal deterministic fluid dynamics, one needs to keep track of the volume form as well. The appropriate mathematical setting for this is an \emph{outer semidirect product group}. This means that one constructs a new group from two given groups with a particular type of group operation. For continuum mechanics, the ingredients are $\mathfrak{D}^s$ and $V^*$, where $V^*$ is a vector space of tensor fields. This vector space is the space of \emph{advected quantities} and it will always contain at least the volume form $\mu$.  \smallskip

\begin{definition}[Advected quantity]
A fluid variable is said to be \emph{advected}, if it keeps its value along Lagrangian particle trajectories. Advected quantities are sometimes called \emph{tracers}, because the evolution histories of scalar advected quantities with different initial values (labels) trace out the Lagrangian particle trajectories of each label, or initial value, via the \emph{push-forward} of the full diffeomorphism group, i.e., $a_t=g_{t\,*}a_0= a_0g_t^{-1}$, where $g_t$ is a time-dependent curve on the manifold of diffeomorphisms that represents the fluid flow.
\end{definition}\smallskip

\begin{remark}[Advected quantities as order parameters]
When several advected quantities are involved, the space $V^*$ is the direct sum of several vector spaces, where each summand space hosts a different advected quantity. In general, each additional advected quantity decreases the dimension of the isotropy subgroup. For example, consider an ideal deterministic fluid with a buoyancy variable $b$, then the Lagrangian corresponding to the model will depend on $\mu$ and $b$ in a parametric manner. This Lagrangian will be right invariant under the action of the isotropy subgroup $\mathfrak{D}^s_{\mu,b} = \{g\in\mathfrak{D}^s|\, g_*\mu=\mu \text{ and } g_*b=b\}$. Hence, advected quantities are \emph{order parameters} and each additional order parameter breaks more symmetry. For the sake of notation, one usually writes $\mathfrak{D}^s_{a_0}$ for the isotropy subgroup, no matter how many advected quantities there are. One then uses $a$ to represent all advected quantities and $a_0$ to denote the initial value of the advected quantities. 
\end{remark}

\subsection{Semidirect product group adjoint \& coadjoint actions}
The semidirect product group action is constructed in the following way. The representation of $\mathfrak{D}^s$ on a vector space $V$ is by push-forward, which is a left representation, as shown by \cite{marsden1984semidirect}. The representation of the group on itself and on its Lie algebra is a right representation. In terms of analysis, this means that all representations are smooth and no derivatives need to be counted. The group action of the semidirect product group is given by
\begin{equation}
\begin{aligned}
\bullet:(\mathfrak{D}^s\times V)\times(\mathfrak{D}^s\times V)\to(\mathfrak{D}^s\times V)\\
\quad (g_1,v_1)\bullet(g_2,v_2):= (g_1\circ g_2,v_2+(g_2)_*v_1)
\end{aligned}
\label{eq:semidirectproduct}
\end{equation}
with $g_1,g_2\in\mathfrak{D}^s$ and $v_1,v_2\in V$. The semidirect product group is often denoted as $\mathfrak{D}^s\circledS V = (\mathfrak{D}^s\times V,\bullet)$. In the group action above, $(g_2)_*v_1$ denotes the \emph{push-forward} of $v_1$ by $g_2$ and $\circ$ denotes composition. Note that the group affects both slots in \eqref{eq:semidirectproduct}, but the vector space only appears in the second slot. The identity element of the semidirect product group is $(e,0)$ where $e\in\mathfrak{D}^s$ is the identity diffeomorphism and $0\in V$ is the zero vector. An inverse element is given by
\begin{equation}
(g,v)^{-1} = (g^{-1},-(g^{-1})_*v) = (g^{-1}, -g^*v),
\end{equation}
where $g^*v$ denotes the pull-back of $v$ by $g$. To understand how reduction works for semidirect products, it is helpful to know how the group acts on its Lie algebra and on the dual of its Lie algebra. Duality will be defined with respect to the sum of the pairing \eqref{eq:dualitypairing} and the dual linear transformation $[\,\cdot\,]^*$ on $V$. This pairing induces another pairing in a natural way on $\mathfrak{X}^s\times V$. Consider two at least $C^1$ one parameter subgroups $(g_t,v_t),(\widetilde{g}_\epsilon,\widetilde{v}_\epsilon)\in \mathfrak{D}^s\times V$. Using these one parameter subgroups, one can compute the inner automorphism, or adjoint action of the group on itself. This adjoint action is defined by conjugation
\begin{equation}
\begin{aligned}
{\rm AD}:(\mathfrak{D}^s\times V)\times(\mathfrak{D}^s &\times V)\to(\mathfrak{D}^s\times V),\\
{\rm AD}_{(g_t,v_t)}(\widetilde{g}_\epsilon,\widetilde{v}_\epsilon) &:= (g_t,v_t)\bullet (\widetilde{g}_\epsilon,\widetilde{v}_\epsilon)\bullet (g_t,v_t)^{-1}\\
&= \big(g_t\circ\widetilde{g}_\epsilon\circ g_t^{-1}, g_t^*(\widetilde{v}_\epsilon - v_t + \widetilde{g}_{\epsilon*}v_t)\big).
\end{aligned}
\label{eq:AD}
\end{equation}
To see how the group acts on its Lie algebra, one can compute the derivative with respect to $\epsilon$ and evaluate at $\epsilon=0$ in the adjoint action of the group on itself. Let $\mathfrak{X}^s\ni \widetilde{u}=\frac{d}{d\epsilon}|_{\epsilon=0}\widetilde{g}_\epsilon$ and $V\ni\widetilde{b}=\frac{d}{d\epsilon}|_{\epsilon=0}\widetilde{v}_\epsilon$. This choice for a vector field is guided by the deterministic reconstruction equation in \eqref{eq:reconstructiondeterministic}. For any tensor $S_\epsilon\in T_s^r(M)$ whose dependence on $\epsilon$ is at least $C^1$ it holds that
\begin{equation}
\frac{d}{d\epsilon}\widetilde{g}_{\epsilon*}S_\epsilon = \widetilde{g}_{\epsilon*}\left(\frac{d}{d\epsilon}S_\epsilon-\mathcal{L}_{\widetilde{u}} S_\epsilon\right).
\label{eq:liechainrule}
\end{equation}
Important here is that the Lie derivative does not commute with pull-backs and push-forwards that depend on parameters, see \cite{abraham1978foundations}. The adjoint action of the group on its Lie algebra can be computed as 
\begin{equation}
\begin{aligned}
{\rm Ad}:(\mathfrak{D}^s\times V)\times(\mathfrak{X}^s &\times V)\to (\mathfrak{X}^s\times V),\\
{\rm Ad}_{(g_t,v_t)}(\widetilde{u},\widetilde{b})&:= \frac{d}{d\epsilon}\Big|_{\epsilon=0}{\rm AD}_{(g_t,v_t)}(\widetilde{g}_\epsilon,\widetilde{v}_\epsilon)\\
&= (g_{t*}\widetilde{u},g^*_t\widetilde{b}-g_t^*\mathcal{L}_{\widetilde{u}}v_t).
\end{aligned}
\label{eq:Ad}
\end{equation}
By means of the pairing on $\mathfrak{X}^s\times V$, one can compute the dual action to the adjoint action \eqref{eq:Ad}. This is called the coadjoint action of the group on the dual of its Lie algebra. Let $(\widetilde{m},\widetilde{a})\in(\mathfrak{X}^s\times V)^*$, then the coadjoint action is given by
\begin{equation}
\begin{aligned}
{\rm Ad}^*:(\mathfrak{D}^s\times V)\times(\mathfrak{X}^s &\times V)^*\to(\mathfrak{X}^s\times V)^*,\\
\langle{\rm Ad}^*_{(g_t^{-1},-g_t^{-1}v_t)}(\widetilde{m},\widetilde{a}),(\widetilde{u},\widetilde{b})\rangle &:= \langle(\widetilde{m},\widetilde{a}),{\rm Ad}_{(g_t,v_t)}(\widetilde{u},\widetilde{b})\rangle,\\
{\rm Ad}^*_{(g_t^{-1},-g_t^{-1}v_t)}(\widetilde{m},\widetilde{a}) &= (g_t^*\widetilde{m}+v_t\diamond g_{t*}\widetilde{a},g_{t*}\widetilde{a}).
\end{aligned}
\label{eq:Ad*}
\end{equation}
\begin{definition}[The diamond operator]
The coadjoint action \eqref{eq:Ad*} features the \emph{diamond operator}, which is defined for $a\in V^*$, $u\in\mathfrak{X}^s$ and fixed $v\in V$ as
\begin{equation}
\langle v\diamond a, u\rangle_{\mathfrak{X}^{s*}\times\mathfrak{X}^s} := -\langle a,\mathcal{L}_u v\rangle_{V^*\times V}.
\end{equation}
Note that the diamond operator is the dual of the Lie derivative regarded as a map $\mathcal{L}_{(\,\cdot\,)}v:\mathfrak{X}^s\to V$, hence $v\diamond(\,\cdot\,):V^*\to\mathfrak{X}^{s*}$. The diamond operator shows how an element from the dual of the vector space acts on the dual of the Lie algebra. 
\end{definition}

When evaluated at $t=0$, the $t$-derivatives of ${\rm Ad}$ in \eqref{eq:Ad} and ${\rm Ad}^*$ in \eqref{eq:Ad*} define, respectively, the adjoint and coadjoint actions of the Lie algebra on itself and on its dual. Denote by $\mathfrak{X}^s\ni u = \frac{d}{dt}|_{t=0}g_t$ and $V\ni b=\frac{d}{dt}|_{t=0}v_t$. The adjoint action of the Lie algebra on itself is
\begin{equation}
\begin{aligned}
{\rm ad}:(\mathfrak{X}^s\times V)\times(\mathfrak{X}^s &\times V)\to (\mathfrak{X}^s\times V),\\
{\rm ad}_{(u,b)}(\widetilde{u},\widetilde{b})&:=\frac{d}{dt}\Big|_{t=0}{\rm Ad}_{(g_t,v_t)}(\widetilde{u},\widetilde{b}),\\
{\rm ad}_{(u,b)}(\widetilde{u},\widetilde{b})&=(-\mathcal{L}_u\widetilde{u},\mathcal{L}_u\widetilde{b}-\mathcal{L}_{\widetilde{u}}b)\\
&= (-[u,\widetilde{u}],\mathcal{L}_u\widetilde{b}-\mathcal{L}_{\widetilde{u}}b),
\end{aligned}
\label{eq:ad}
\end{equation}
where the bracket $[\,\cdot\,,\,\cdot\,]$ in \eqref{eq:ad} is the commutator of vector fields. The minus sign is due to fact that group acts on itself from the right. The coadjoint action of the Lie algebra on its dual can be obtained by computing the dual to \eqref{eq:ad} or by taking the derivative with respect to $t$ and evaluate at $t=0$ in \eqref{eq:Ad*}. Either way, one arrives at
\begin{equation}
\begin{aligned}
{\rm ad}^*:(\mathfrak{X}^s\times V)\times(\mathfrak{X}^s &\times V)^*\to(\mathfrak{X}^s\times V)^*,\\
\langle{\rm ad}^*_{(u,b)}(\widetilde{m},\widetilde{a}),(\widetilde{u},\widetilde{b})\rangle &:= \langle(\widetilde{m},\widetilde{a}),{\rm ad}_{(u,b)}(\widetilde{u},\widetilde{b})\rangle,\\
{\rm ad}^*_{(u,b)}(\widetilde{m},\widetilde{a}) &= (\mathcal{L}_u\widetilde{m} + b\diamond\widetilde{a},-\mathcal{L}_u\widetilde{a}),
\end{aligned}
\label{eq:ad*}
\end{equation}
in which \eqref{eq:ad} implies the last line in \eqref{eq:ad*}. Alternatively, one can also obtain \eqref{eq:ad*} by taking the derivative with respecto to $t$ in \eqref{eq:Ad*} and evaluate at $t=0$. \smallskip

\begin{remark}[Coadjoint action and the diamond operator]
The coadjoint action is an important operator in geometric mechanics and representation theory. It was shown by \cite{kirillov1962unitary} and in further work by \cite{kostant1970quantization} and \cite{souriau1970structure} that the coadjoint orbits of a Lie group $G$ have the structure of symplectic manifolds and are connected with Hamiltonian mechanics. See \cite{kirillov1999merits} for a review. The computations of the adjoint and coadjoint actions for the semidirect product group is valuable for fluid mechanics, as they introduce the two fundamental operators that appear in the equations of motion. The Lie derivative is responsible for transport of tensors along vector fields and its dual action given by the diamond operator encodes the symmetry breaking. In particular, the diamond operator introduces the effect of symmetry breaking into the Euler-Poincar\'e equations of motion.
\end{remark}

\section{Deterministic geometric fluid dynamics}
With the adjoint and coadjoint actions defined, one can derive continuum mechanics equations with advected quantities by using symmetry reduction. Euler-Poincar\'e reduction for a semidirect product group $\mathfrak{D}^s\times V$ as developed in \cite{holm1998euler} is sketched below in figure \ref{fig:cubesdp}. 
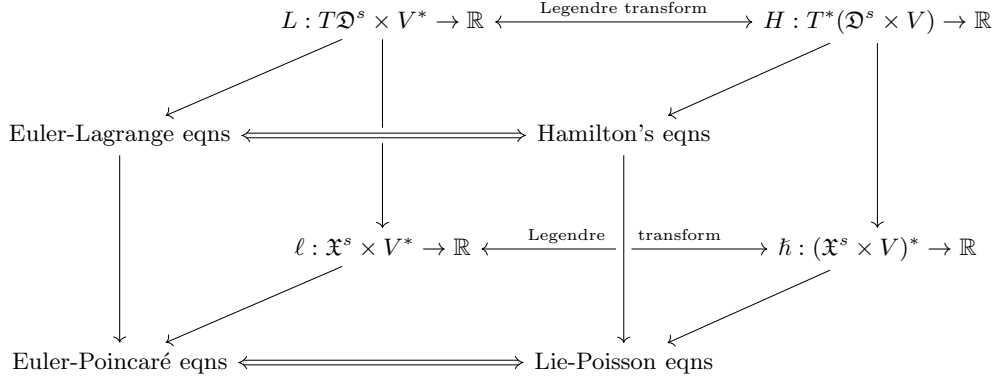
\begin{figure}[H]
\small
\centering
\begin{tikzcd}[row sep=3em, column sep=small]
& 
L:T\mathfrak{D}^s\times V^*\to\mathbb{R} \arrow[dl] \arrow[rr,  "\text{Legendre transform}", leftrightarrow] \arrow[dd] 
& 
& 
H:T^*(\mathfrak{D}^s\times V)\to\mathbb{R} \arrow[dl] \arrow[dd]
\\
\text{Euler-Lagrange eqns} \arrow[rr, crossing over, Leftrightarrow] 
& 
& \text{Hamilton's eqns}
\\
&
\ell:\mathfrak{X}^s\times V^*\to\mathbb{R} \arrow[dl] \arrow[rr, "\text{Legendre \hspace{0.25cm} transform}", leftrightarrow] 
& 
& 
\hslash:(\mathfrak{X}^s\times V)^*\to\mathbb{R} \arrow[dl] 
\\
\text{Euler-Poincar\'e eqns} \arrow[rr, Leftrightarrow] \arrow[from=uu, crossing over]
& 
& 
\text{Lie-Poisson eqns} \arrow[from=uu, crossing over]
\end{tikzcd}
\caption{The cube of continuum mechanics in the semidirect product group setting. Reduction is indicated by the arrows pointing down.}
\label{fig:cubesdp}
\end{figure}
 As shown by comparison of  figure \ref{fig:cubesdp} with figure \ref{fig:cube}, several new features arise in semidirect product Lie group reduction which differ from Euler-Poincar\'e reduction by symmetry when the configuration space itself is a Lie group. These differences can be conveniently explained by introducing the physical concept of an order parameter. As discussed earlier, the order parameters in continuum mechanics are the elements of $V^*$ which are advected by the action of the diffeomorphism group $\mathfrak{D}^s$. The advection is defined simply as the semidirect product action on the elements of $V^*$. The introduction of each additional advected state variable (or, order parameter) into the physical problem further breaks the original symmetry $\mathfrak{D}^s$. The remaining symmetry of the Lagrangian in Hamilton's principle is the isotropy subgroup $\mathfrak{D}^s_{a_0}$ of the initial conditions, $a_0$, for the entire set of advected quantities, $a$. The action of the diffeomorphism group $\mathfrak{D}^s$ on these initial conditions then describes their advection as the action of $\mathfrak{D}^s$ on its coset space $\mathfrak{D}^s\setminus\mathfrak{D}^s_{a_0}=V^*$. Once the inital values of the order parameters, $a_0$,  have been set, one must still define a Legendre transform from the Lagrangian formulation into the Hamiltonian formulation and vice versa. The Legendre transform in the setting of semidirect products is a partial Legendre transform, since it transforms between $T\mathfrak{D}^s$ and $T^*\mathfrak{D}^s$ or $T\mathfrak{D}^s\setminus\mathfrak{D}^s_{a_0} \simeq \mathfrak{X}^s$ and $T^*\mathfrak{D}^s\setminus\mathfrak{D}^s_{a_0} \simeq\mathfrak{X}^{s*}$ only after having fixed the value $a_0$ of the order parameters, which live in $V^*$. This coset reduction is what figure \ref{fig:cubesdp} shows. The remaining invariance of a functional under the action of the isotropy subgroup is called its \emph{particle relabelling symmetry}.

Our exploration continues on the Lagrangian side in figure \ref{fig:cubesdp}. Consider a fluid Lagrangian $L:T\mathfrak{D}^s\times V^*\to\mathbb{R}$. By fixing the value of $a_0\in V^*$, one can construct $L_{a_0}:T\mathfrak{D}^s\to\mathbb{R}$. If this Lagrangian is right invariant under the action of the isotropy subgroup $\mathfrak{D}_{a_0}^s$, then one can construct 
\begin{equation}
\begin{aligned}
L\left(\frac{d}{dt}g\circ g^{-1},e,a_0\right) &= L_{a_0}\left(\frac{d}{dt}g\circ g^{-1},e\right)\\
&= \ell_{a_0}\left(\frac{d}{dt}g\circ g^{-1}\right) = \ell\left(\frac{d}{dt}g\circ g^{-1}, g_*a_0\right).
\end{aligned}
\label{eq:lagrangians}
\end{equation}
Here $\circ$ means composition of functions. The same procedure applies to the Hamiltonian. Since the coadjoint action is known, it is straightforward to formulate the Lie-Poisson equations. The details of Hamiltonian semidirect product reduction and also more information on the Lagrangian semidirect product reduction can be found in \cite{holm1998euler}. 

The coadjoint action of the Lie algebra on its dual is also required for the Lagrangian semidirect product reduction. One can use the deterministic reconstruction equation to see that the argument of the Lagrangians in \eqref{eq:lagrangians} is
\begin{equation}
\frac{d}{dt}g\circ g^{-1} = u.
\end{equation}
Using this information, one can integrate the Lagrangian in time to construct the action functional. By requiring the variational derivative of the action functional to vanish, one can compute the equations of motion. However, due to the removal of symmetries, the variations are no longer free. 

\section{Stochastic geometric fluid dynamics}
In the situation where noise is present, that is, when the reconstruction equation is  \eqref{eq:reconstructionstochastic}, the Euler-Poincar\'e variations become stochastic. Consider $g:\mathbb{R}^2\to\mathfrak{D}^s$ with $g_{t,\epsilon}=g(t,\epsilon)$ to be a two parameter subgroup with smooth dependence on $\epsilon$, but only continuous dependence on $t$. Let us denote 
\[
{\sf d}\chi_{t,\epsilon}(X) = ({\sf d}g_{t,\epsilon}\circ g_{t,\epsilon})(X) = u_{t,\epsilon}(X)dt + \sum_{i=1}^N \xi_i(X)\circ dW_t^i
\]
 and  
 \[
 v_{t,\epsilon}(X) = (\frac{\partial}{\partial \epsilon}g_{t,\epsilon}\circ g_{t,\epsilon})(X)\,.
 \] 
 When a $\circ$ symbol is followed by $dW_t$ it means Stratonovich integration and in every other context the $\circ$ symbol is used to denote composition. Note that the data vector fields $\xi_i$ are prescribed and hence will not have a dependence on $\epsilon$. 
 
In order to compute with these stochastic subgroups and their associated vector fields, one needs a stochastic Lie chain rule. The Kunita-It\^o-Wentzell (KIW) formula is the stochastic generalisation of the Lie chain rule \eqref{eq:liechainrule}. A proof of the KIW formula is given  in \cite{de2020implications} for differential $k$-forms and vector fields. That proof includes the technical details on regularity that will be omitted here. In the KIW formula, the $k$-form is allowed to be a semimartingale itself. Let $K$ be a continuous adapted semimartingale that takes values in the $k$-forms and satisfies
\begin{equation}
K_t = K_0 + \int_0^t G_s ds + \sum_{i=1}^N\int_0^t H_{i\,s}\circ dB_s^i,
\label{eq:kformsemimartingale}
\end{equation}
where the $B_t^i$ are independent, identically distributed Brownian motions. The drift of the semimartingale $K$ is determined by $G$ and the diffusion by $H_i$, both of which are $k$-form valued continuous adapted semimartingales with suitable regularity. Let $g_t$ satisfy \eqref{eq:reconstructionstochastic}, then \cite{de2020implications} shows that the following holds
\begin{equation}
{\sf d}(g_t^*K_t) = g_t^*\big({\sf d}K_t + \mathcal{L}_{u_t} K_t\,dt + \mathcal{L}_{\xi_i}K_t \circ dW_t^i\big).
\label{eq:kiwformula}
\end{equation}
Equation \eqref{eq:kformsemimartingale} helps to interpret the ${\sf d}K_t$ term in the KIW formula \eqref{eq:kiwformula}. This formula will be particularly useful in computing the variations of the variables in the Lagrangian. To compute these variations, one needs the variational derivative. 

\paragraph{The variational derivative.} The variational derivative of a functional $F:\mathcal{B}\to\mathbb{R}$, where $\mathcal{B}$ is a Banach space, is denoted $\delta F/\delta \rho$ with $\rho\in\mathcal{B}$. The variational derivative can be defined by the first variation of the functional
\begin{equation}
\delta F[\rho]:= \frac{d}{d\epsilon}\Big|_{\epsilon=0} F[\rho+\epsilon \delta\rho] = \int \frac{\delta F}{\delta \rho}(x)\delta\rho(x)\,dx = \left\langle\frac{\delta F}{\delta \rho},\delta \rho\right\rangle.
\end{equation}
In the definition above, $\epsilon\in\mathbb{R}$ is a parameter, $\delta\rho\in\mathcal{B}$ is an arbitrary function and the first variation can be understood as a Fr\'echet derivative. A precise and rigorous definition can be found in \cite{gelfand2000calculus}. With the definition of the functional derivative in place, the following lemma can be formulated.
\medskip

\begin{lemma}
With the notation as above, the variations of $u$ and any advected quantity $a$ are given by 
\begin{equation}
\delta u(t) = {\sf d}v(t) + [{\sf d}\chi_t,v(t)],\quad \delta a(t) = -\mathcal{L}_{v(t)}a(t),
\label{def:delta-var}
\end{equation}
where $v(t)\in\mathfrak{X}^s$ is arbitrary.
\end{lemma}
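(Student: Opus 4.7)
The plan is to derive both identities by commuting the two derivations at play on the flow map $g_{t,\epsilon}$: the Stratonovich time differential ${\sf d}$ and the smooth variational derivative $\partial_\epsilon$. The formula for $\delta a$ will follow from the advection law $a = g_{t,\epsilon *} a_0$ together with the deterministic Lie chain rule \eqref{eq:liechainrule}, while the formula for $\delta u$ will follow from equating $\partial_\epsilon\,{\sf d}g$ and ${\sf d}\,\partial_\epsilon g$ at $\epsilon = 0$ and reading off the components of the resulting semimartingale.

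For $\delta a$: since $a_0$ is prescribed independently of the variation, $a(t,\epsilon) = g_{t,\epsilon *} a_0$, and its $\epsilon$-dependence is smooth, so no stochastic calculus enters this step. Factoring $\phi_\epsilon := g_{t,\epsilon}\circ g_{t,0}^{-1}$, so that $\phi_0 = \mathrm{id}$ and $\partial_\epsilon\phi_\epsilon|_{\epsilon=0} = v$, we have $g_{t,\epsilon *} a_0 = \phi_{\epsilon *} a$ with $a := a(t,0)$. The Lie chain rule \eqref{eq:liechainrule} applied to $\phi_\epsilon$ with constant tensor $S \equiv a$ yields
\[
\delta a \;=\; \partial_\epsilon\,\phi_{\epsilon *} a\big|_{\epsilon = 0} \;=\; -\mathcal{L}_v\, a.
\]

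For $\delta u$: from the reconstruction equation \eqref{eq:reconstructionstochastic} we have ${\sf d}g_{t,\epsilon}(X) = u_{t,\epsilon}(g_{t,\epsilon}(X))\,dt + \xi_i(g_{t,\epsilon}(X))\circ dW_t^i$. Differentiating in $\epsilon$ at $\epsilon = 0$, using that the $\xi_i$ carry no $\epsilon$-dependence, and writing $\nabla_V W = V^k\partial_k W$ for the directional derivative of a vector field $W$ in the direction of $V$, the chain rule produces
\[
\partial_\epsilon\,{\sf d}g\big|_{\epsilon=0} \;=\; \big((\delta u)\circ g + \nabla_v u \circ g\big)\,dt + \nabla_v \xi_i \circ g \circ dW_t^i .
\]
On the other hand $\partial_\epsilon g|_{\epsilon = 0} = v\circ g$, and applying the Stratonovich chain rule to this composition (the vector-field analogue of the KIW formula \eqref{eq:kiwformula}, with $v$ itself allowed to be a semimartingale in $t$) gives
\[
{\sf d}\,\partial_\epsilon g\big|_{\epsilon=0} \;=\; {\sf d}(v\circ g) \;=\; ({\sf d}v)\circ g + \nabla_u v \circ g\,dt + \nabla_{\xi_i} v \circ g \circ dW_t^i.
\]
Because $g_{t,\epsilon}$ is $C^1$ in $\epsilon$ and Stratonovich in $t$, the two mixed differentials agree. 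Equating them, right-composing with $g^{-1}$, and rearranging via $[X,Y] = \nabla_X Y - \nabla_Y X$, we obtain
\[
(\delta u)\,dt \;=\; {\sf d}v + [u,v]\,dt + [\xi_i, v]\circ dW_t^i \;=\; {\sf d}v + [{\sf d}\chi_t,\, v],
\]
which is the asserted identity (the LHS being read as $\delta({\sf d}\chi_t)$, since $\delta\xi_i = 0$).

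The principal obstacle is the rigorous justification of the exchange $\partial_\epsilon\,{\sf d} = {\sf d}\,\partial_\epsilon$ on $g_{t,\epsilon}$: this is precisely the payoff of working in the Stratonovich formalism, whose integrals obey the ordinary chain rule, in contrast to the It\^o case where covariation corrections would appear. Two subsidiary technical points are (i) appealing to the ODE solvability on the Hilbert manifold $\mathfrak{D}^s$ (available for $s > n/2 + 1$) to guarantee that every $v \in \mathfrak{X}^s$ arises as the infinitesimal generator of some admissible two-parameter family $g_{t,\epsilon}$, so that the $v$ appearing in \eqref{def:delta-var} is truly arbitrary, and (ii) tracking $H^s$-regularity of all compositions, which ensures that the Stratonovich integrals entering the identity are well-defined adapted semimartingales.
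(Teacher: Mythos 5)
Your proof is correct and follows essentially the same route as the paper: equate the mixed derivative $\partial_\epsilon\,{\sf d}g$ with ${\sf d}\,\partial_\epsilon g$ at $\epsilon=0$, subtract to produce the commutator $[{\sf d}\chi_t,v]$, and obtain $\delta a=-\mathcal{L}_v a$ from the $\epsilon$-derivative of the push-forward $a(t,\epsilon)=g_{t,\epsilon*}a_0$ (the paper invokes the KIW formula here, but since $\epsilon$ is a smooth parameter this reduces to the deterministic Lie chain rule you use, which is the cleaner statement). Your closing remarks on the Stratonovich chain rule and the surjectivity of $v$ onto $\mathfrak{X}^s$ address exactly the technical points the paper defers to \cite{de2020implications}.
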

\begin{proof}
The proof of the variation of $a(t)$ is a direct application of the Kunita-It\^o-Wentzell formula to $a(t,\epsilon)=g_{t,\epsilon*}a_0$. Note that the data vector fields $\xi_i$ are prescribed and do not depend on $\epsilon$. Denote by $x_{t,\epsilon} = g_{t,\epsilon}(X)$. Then one has
\begin{equation}
{\sf d}g_{t,\epsilon}(X) = {\sf d}x_{t,\epsilon} = u_{t,\epsilon}(x_{t,\epsilon})\,dt + \sum_{i=1}^N \xi_i(x_{t,\epsilon})\circ dW_t^i =: {\sf d}\chi_{t,\epsilon}(x_{t,\epsilon}).
\label{eq:twoparameterstochu}
\end{equation}
The vector field associated to the $\epsilon$-dependence of the two parameter subgroup is given by
\begin{equation}
\frac{\partial}{\partial \epsilon}g_{t,\epsilon} = \frac{\partial}{\partial \epsilon}x_{t,\epsilon} = v_{t,\epsilon}(x_{t,\epsilon}).
\label{eq:twoparameterstochv}
\end{equation}
Computing the derivative with respect to $\epsilon$ of \eqref{eq:twoparameterstochu} gives
\begin{equation}
\begin{aligned}
\frac{\partial}{\partial \epsilon}{\sf d}x_{t,\epsilon} &= \frac{\partial}{\partial \epsilon}\big({\sf d}\chi_{t,\epsilon}(x_{t,\epsilon})\big)\\
&= \left(\frac{\partial}{\partial \epsilon}u_{t,\epsilon} + v_{t,\epsilon}\cdot\frac{\partial}{\partial x_{t,\epsilon}}{\sf d}\chi_{t,\epsilon}\right)(x_{t,\epsilon}),
\end{aligned}
\end{equation}
where the independence of the data vector fields $\xi_i$ on $\epsilon$ was used. Taking the differential with respect to time of \eqref{eq:twoparameterstochv} gives
\begin{equation}
\begin{aligned}
{\sf d}\left(\frac{\partial}{\partial \epsilon} x_{t,\epsilon}\right) &= {\sf d}\big(v_{t,\epsilon}(x_{t,\epsilon})\big)\\
&=  \left( {\sf d}v_{t,\epsilon}(x_{t,\epsilon}) + {\sf d}\chi_{t,\epsilon}\cdot\frac{\partial}{\partial x_{t,\epsilon}}v_{t,\epsilon}\right)(x_{t,\epsilon}).
\end{aligned}
\end{equation}
One can then evaluate at $\epsilon=0$ and call upon equality of cross derivative-differential to obtain the result by subtracting. Since $g_{t,\epsilon}$ depends on $t$ in a $C^0$ manner, the integral representation is required. The particle relabelling symmetry permits one to stop writing the explicit dependence on space,
\begin{equation}
\delta u(t)\,dt = {\sf d}v(t) + [{\sf d}\chi_t,v(t)].
\end{equation}
This completes the proof of formula \eqref{def:delta-var} for the variation of $u(t)$.
\end{proof}
The notation in \eqref{eq:twoparameterstochu} needs careful explanation, because it comprises both a stochastic differential equation and a definition. The symbol ${\sf d}\chi_{t,\epsilon}$ is used to define a vector field, whereas ${\sf d}x_{t,\epsilon}$ denotes a stochastic differential equation. This lemma makes the presentation of the stochastic Euler-Poincar\'e theorem particularly simple. 
\medskip

\begin{theorem}[Stochastic Euler-Poincar\'e]\label{thm:SEP}
With the notation as above, the following are equivalent.
\begin{enumerate}[i)]
\item The constrained variational principle
\begin{equation}
\delta\int_{t_1}^{t_2}\ell(u,a)\,dt = 0
\end{equation}
holds on $\mathfrak{X}^s\times V^*$, using variations $\delta u$ and $\delta a$ of the form
\begin{equation}
\delta u = {\sf d}v + [{\sf d}\chi_t,v], \qquad \delta a = -\mathcal{L}_v a,
\end{equation}
where $v(t)\in \mathfrak{X}^s$ is arbitrary and vanishes at the endpoints in time for arbitrary times $t_1,t_2$.
\item The stochastic Euler-Poincar\'e equations hold on $\mathfrak{X}^s\times V^*$
\begin{equation}
{\sf d}\frac{\delta \ell}{\delta u} + \mathcal{L}_{{\sf d}\chi_t}\frac{\delta \ell}{\delta u} = \frac{\delta \ell}{\delta a}\diamond a\,dt,
\label{eq:stochep}
\end{equation}
and the advection equation
\begin{equation}
{\sf d}a + \mathcal{L}_{{\sf d}\chi_t}a = 0.
\label{eq:stochadv}
\end{equation}
\end{enumerate}
\end{theorem}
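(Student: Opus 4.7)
I would prove the equivalence by extending the standard Euler-Poincar\'e derivation to the semimartingale setting, using the constrained variations provided by the preceding lemma together with the KIW formula \eqref{eq:kiwformula}.

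For (i) $\Rightarrow$ (ii), the starting point is the chain-rule expansion
\[
0 \;=\; \delta\int_{t_1}^{t_2}\ell(u,a)\,dt \;=\; \int_{t_1}^{t_2}\!\left(\left\langle\frac{\delta\ell}{\delta u},\delta u\right\rangle + \left\langle\frac{\delta\ell}{\delta a},\delta a\right\rangle\right)dt,
\]
into which I substitute $\delta u\,dt = {\sf d}v + [{\sf d}\chi_t,v]$ and $\delta a = -\mathcal{L}_v a$ from \eqref{def:delta-var}. This produces three pieces to be processed separately.

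The ${\sf d}v$ piece is handled by stochastic integration by parts: since $v$ is taken as a test vector field that is smooth in time and vanishes at $t_1,t_2$, its cross-variation with $\delta\ell/\delta u$ vanishes, and the boundary terms drop out, giving
\[
\int_{t_1}^{t_2}\!\left\langle\frac{\delta\ell}{\delta u},{\sf d}v\right\rangle = -\int_{t_1}^{t_2}\!\left\langle{\sf d}\frac{\delta\ell}{\delta u},v\right\rangle.
\]
The commutator piece is rewritten with the duality identity $\langle m,[w,v]\rangle_{\mathfrak{X}^{s*}\times\mathfrak{X}^s} = -\langle \mathcal{L}_w m, v\rangle$, obtained by combining \eqref{eq:ad}--\eqref{eq:ad*} with the convention $\mathrm{ad}_u\tilde u = -[u,\tilde u]$ coming from the right action of the diffeomorphism group on itself. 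The advected-quantity piece is rewritten by direct application of the diamond operator and integration by parts on the closed manifold $M$, which yields $\langle\delta\ell/\delta a,-\mathcal{L}_v a\rangle = \langle(\delta\ell/\delta a)\diamond a,v\rangle$. Gathering the three pieces under a single pairing with $v$ gives
\[
\int_{t_1}^{t_2}\!\left\langle -{\sf d}\frac{\delta\ell}{\delta u} - \mathcal{L}_{{\sf d}\chi_t}\frac{\delta\ell}{\delta u} + \frac{\delta\ell}{\delta a}\diamond a\,dt,\; v\right\rangle = 0,
\]
and since $v$ is arbitrary (vanishing at the endpoints) the fundamental lemma of the calculus of variations produces \eqref{eq:stochep}.

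The advection equation \eqref{eq:stochadv} does not come from the variational principle but from the defining relation $a(t) = g_{t\,*}a_0$, equivalently $g_t^*a(t)=a_0$. Applying the KIW formula \eqref{eq:kiwformula} to the pull-back $g_t^*a(t)$ and using that $a_0$ is independent of $t$ gives $g_t^*\bigl({\sf d}a + \mathcal{L}_{{\sf d}\chi_t}a\bigr)=0$, from which \eqref{eq:stochadv} follows after pushing forward by $g_t$. The converse direction (ii) $\Rightarrow$ (i) is obtained by reversing the chain of identities above. The main technical obstacle is the stochastic integration-by-parts step in the presence of the Stratonovich-driven piece of ${\sf d}\chi_t$: one must either restrict to variations $v$ of bounded variation in $t$ so that all quadratic-covariation terms vanish automatically, or verify via a second application of the KIW formula that those corrections cancel. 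Beyond that, the proof is essentially bookkeeping, but it requires close attention to the sign conventions coming from the right-action identity $\mathrm{ad}_u\tilde u = -[u,\tilde u]$ and from the definition of the diamond operator.
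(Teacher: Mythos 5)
Your proposal is correct and follows essentially the same route as the paper: substitute the constrained variations from the lemma, integrate by parts using the endpoint conditions, pass the commutator and Lie-derivative terms to their duals (producing $\mathcal{L}_{{\sf d}\chi_t}\delta\ell/\delta u$ and the diamond term), invoke arbitrariness of $v$, and obtain the advection equation by applying the KIW formula to $a(t)=g_{t*}a_0$. The extra care you take over the Stratonovich integration-by-parts step is a detail the paper's proof silently elides, but it does not change the argument.
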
 

\begin{proof}
Using integration by parts and the endpoint conditions $v(t_1)=0=v(t_2)$, the variation can be computed to be
\begin{equation}
\begin{aligned}
\delta\int_{t_1}^{t_2}\ell(u,a)\,dt 
&= 
\int_{t_1}^{t_2}\left\langle\frac{\delta\ell}{\delta u},\delta u\right\rangle + \left\langle\frac{\delta\ell}{\delta a},\delta a\right\rangle\,dt\\
&= \int_{t_1}^{t_2}\left\langle\frac{\delta\ell}{\delta u},{\sf d}v + [{\sf d}\chi_t,v]\right\rangle + \left\langle\frac{\delta\ell}{\delta a}\,dt,-\mathcal{L}_v a\right\rangle\\
&= \int_{t_1}^{t_2}\left\langle -{\sf d}\frac{\delta\ell}{\delta u} - \mathcal{L}_{{\sf d}\chi_t}\frac{\delta\ell}{\delta u} + \frac{\delta\ell}{\delta a}\diamond a\,dt,v\right\rangle\\
&= 0\,.
\end{aligned}
\end{equation}
Since the vector field $v$ is arbitrary, one obtains the stochastic Euler-Poincar\'e equations. Finally, the advection equation \eqref{eq:stochadv} follows by applying the KIW formula to $a(t)=g_{t*}a_0$.
\end{proof}

\begin{remark}
The stochastic Euler-Poincar\'e theorem is equivalent to the version presented in \citet{holm2015variational}, which uses stochastic Clebsch constraints. In \cite{holm2015variational} one can also find an investigation the It\^o formulation of the stochastic Euler-Poincar\'e equation. 
\end{remark}

\paragraph{Stochastic Lie-Poisson formulation.}
The stochastic Euler-Poincar\'e equations have an equivalent stochastic Lie-Poisson formulation. To obtain the Lie-Poisson formulation, one must Legendre transform the reduced Lagrangian. The Legendre transformation in the presence of stochasticity becomes itself stochastic in the following way
\begin{equation}
m := \frac{\delta\ell}{\delta u}, \qquad \hslash(m,a)\,dt + \sum_{i=1}^N\langle m,\xi_i\rangle \circ dW_t^i = \langle m,{\sf d}\chi_t\rangle - \ell(u,a)\,dt.
\label{eq:reducedstochlegendre}
\end{equation}
The stochasticity enters the Legendre transformation because the momentum map $m$ is coupled to the stochastic vector field ${\sf d}\chi_t$. The left hand side of the transformation determines the Hamiltonian, which is a semimartingale. The underlying semidirect product group structure has not changed, it is still the $H^s$ diffeomorphisms with a vector space, but the Hamiltonian has become a semimartingale. This implies that in the stochastic case the energy is not conserved, because Hamiltonian depends on time explicitly. Note that \eqref{eq:reducedstochlegendre} emphasises that the Lagrangian does not feature stochasticity in this framework. Instead, the Lagrangian represents the physics in the problem, which does not change. The stochasticity is supposed to account for the difference between observed data and deterministic modelling. The stochastic Lie-Poisson equations are given by
\begin{equation}
{\sf d}(m,a) = -{\rm ad}^*_{(\frac{\delta\hslash}{\delta m},\frac{\delta\hslash}{\delta a})}(m,a)\,dt - \sum_{i=1}^N{\rm ad}^*_{(\xi_i,0)}(m,a)\circ dW_t^i,
\label{eq:stochliepoisson}
\end{equation}
where ${\rm ad}^*$ is given in \eqref{eq:ad*}. Since both the drift and the diffusion part use the same operator in \eqref{eq:stochliepoisson}, the stochastic Lie-Poisson equations preserve the same family of Casimirs (or integral conserved quantities) as the deterministic Lie-Poisson equations. The stochastic Euler-Poincar\'e theorem has a stochastic Kelvin-Noether circulation theorem as a corollary.

Let the manifold $M$ be a submanifold of $\mathbb{R}^n$ with coordinates $X$. Then the volume form can be expressed with respect to a density. That is, $\mu(d^n X) = \rho_0(X)d^n X$. By pushing forward $\rho_0$ along the stochastic flow $g_t$, one obtains $\rho$. Let $\mathfrak{C}^s$ be the space of loops $\gamma:S^1\to\mathfrak{D}^s$, which is acted upon from the left by $\mathfrak{D}^s$. Given an element $m\in\mathfrak{X}^s$, one can obtain a 1-form by formally dividing $m$ by the density $\rho$. 

The circulation map $\mathcal{K}:\mathfrak{C}^s\times V^*\to\mathfrak{X}^{s**}$ is defined by 
\begin{equation}
\langle \mathcal{K}(\gamma,a),m\rangle = \oint_\gamma\frac{m}{\rho}\,.
\end{equation}
Given a Lagrangian $\ell:\mathfrak{X}^s\times V^*\to \mathbb{R}$,  the \emph{Kelvin-Noether quantity} is defined by
\begin{equation}
I(\gamma,u,a) := \oint_\gamma\frac{1}{\rho}\frac{\delta\ell}{\delta u}\,.
\end{equation}
One can now formulate the following stochastic Kelvin-Noether circulation theorem. 
\medskip

\begin{theorem}[Stochastic Kelvin-Noether]\label{Thm:KelThm}
Let $u_t=u(t)$ satisfy the stochastic Euler-Poincar\'e equation \eqref{eq:stochep} and $a_t=a(t)$ the stochastic advection equation \eqref{eq:stochadv}. Let $g_t$ be the flow associated to the vector field ${\sf d}\chi_t$. That is, ${\sf d}\chi_t = {\sf d}g_t\circ g_t^{-1} = u_t\,dt + \sum_{i=1}^N \xi_i\circ dW_t^i$. Let $\gamma_0\in \mathfrak{C}^s$ be a loop. Denote by $\gamma_t = g_t\circ \gamma_0$ and define the Kelvin-Noether quantity $I(t):= I(\gamma_t,u_t,a_t)$. Then
\begin{equation}
{\sf d}I(t) = \oint_{\gamma_t}\frac{1}{\rho}\frac{\delta\ell}{\delta a}\diamond a\,dt\,.
\label{eqn:KelThm}
\end{equation}
\end{theorem}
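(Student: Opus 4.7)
The plan is to push the moving loop $\gamma_t$ back to the fixed loop $\gamma_0$ via the stochastic flow $g_t$, differentiate inside the line integral using the Kunita--It\^o--Wentzell formula \eqref{eq:kiwformula}, and then substitute the stochastic Euler--Poincar\'e equation \eqref{eq:stochep} together with the advection equation for $\rho$.

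First, since $m := \delta\ell/\delta u \in \mathfrak{X}^{s*}$ is a one-form density and $\rho$ is a density, the ratio $m/\rho$ is a genuine one-form and can be integrated along loops. Because $\gamma_t = g_t\circ\gamma_0$, the usual change of variables for pullbacks yields
\begin{equation*}
I(t) \;=\; \oint_{\gamma_t}\frac{m}{\rho} \;=\; \oint_{\gamma_0} g_t^*\!\left(\frac{m}{\rho}\right).
\end{equation*}
The loop $\gamma_0$ carries no $t$-dependence, so the stochastic differential commutes with the integral and the KIW formula applies to the one-form $K_t = m_t/\rho_t$:
\begin{equation*}
{\sf d}I(t) \;=\; \oint_{\gamma_0} g_t^*\!\left({\sf d}\frac{m}{\rho} \,+\, \mathcal{L}_{{\sf d}\chi_t}\frac{m}{\rho}\right).
\end{equation*}

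Next I would exploit the fact that Stratonovich calculus obeys the ordinary Leibniz rule (no It\^o correction) and that the Lie derivative is a tensor derivation. Writing $m = (m/\rho)\cdot \rho$ as the product of a one-form with a density and applying the Leibniz rule to both ${\sf d}$ and $\mathcal{L}_{{\sf d}\chi_t}$ gives
\begin{equation*}
{\sf d}\frac{m}{\rho} + \mathcal{L}_{{\sf d}\chi_t}\frac{m}{\rho}
\;=\; \frac{1}{\rho}\bigl({\sf d}m + \mathcal{L}_{{\sf d}\chi_t}m\bigr)
\;-\; \frac{m}{\rho^{2}}\bigl({\sf d}\rho + \mathcal{L}_{{\sf d}\chi_t}\rho\bigr).
\end{equation*}
Now the two equations of Theorem~\ref{thm:SEP} do all the work: since $\rho = g_{t*}\rho_0$ is advected, the advection equation \eqref{eq:stochadv} forces the second bracket to vanish, while the stochastic Euler--Poincar\'e equation \eqref{eq:stochep} replaces the first bracket by $(\delta\ell/\delta a)\diamond a\,dt$. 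Notice that this is where the stochastic part cancels cleanly: the $\xi_i\circ dW_t^i$ terms sit inside $\mathcal{L}_{{\sf d}\chi_t}$ on both sides and are absorbed exactly by the transport generated by $g_t$, which is why no residual martingale contribution survives on the right-hand side.

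Substituting back and changing variables from $\gamma_0$ to $\gamma_t$ gives
\begin{equation*}
{\sf d}I(t) \;=\; \oint_{\gamma_0} g_t^*\!\left(\frac{1}{\rho}\frac{\delta \ell}{\delta a}\diamond a\right) dt \;=\; \oint_{\gamma_t}\frac{1}{\rho}\frac{\delta\ell}{\delta a}\diamond a\,dt,
\end{equation*}
which is \eqref{eqn:KelThm}. The main obstacle is the middle step: justifying the Leibniz identity for ${\sf d}$ and $\mathcal{L}_{{\sf d}\chi_t}$ on the quotient of the semimartingale one-form density $m_t$ by the semimartingale density $\rho_t$. Because KIW is stated in \cite{de2020implications} for $k$-forms (and vector fields) rather than for densities or quotients thereof, one must either reinterpret $m/\rho$ as a one-form-valued semimartingale of the form \eqref{eq:kformsemimartingale} to which KIW applies directly, or verify the derivation property componentwise in coordinates, using the fact that in the Stratonovich convention the product rule carries no quadratic-variation corrections.
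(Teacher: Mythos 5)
Your proposal is correct and follows the paper's strategy step for step: pull the loop back to $\gamma_0$, differentiate under the integral with the Kunita--It\^o--Wentzell formula, insert the stochastic Euler--Poincar\'e and advection equations, and push forward again. The only divergence is in how the factor $1/\rho$ is handled, and it is exactly the point you flag as your main obstacle. You keep $m/\rho$ together, apply KIW to the quotient, and then need a Stratonovich--Leibniz identity for ${\sf d}$ and $\mathcal{L}_{{\sf d}\chi_t}$ acting on the product of a one-form with a density, plus the advection equation for $\rho$ to kill the second bracket. The paper instead uses the advection of $\rho$ one step earlier, at the level of the pullback: since $\rho = g_{t*}\rho_0$, one has $g_t^*\bigl(\tfrac{1}{\rho}\tfrac{\delta\ell}{\delta u}\bigr) = \tfrac{1}{\rho_0}\,g_t^*\bigl(\tfrac{\delta\ell}{\delta u}\bigr)$, so the time-independent factor $1/\rho_0$ passes outside the stochastic differential and KIW is applied only to $\tfrac{\delta\ell}{\delta u}$, which is already a semimartingale of the form \eqref{eq:kformsemimartingale} by virtue of \eqref{eq:stochep}. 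That ordering dissolves the obstacle you identify: no quotient rule and no separate verification that $m/\rho$ is an admissible KIW integrand are needed. Your route is equivalent and produces the same clean cancellation of the martingale terms, but the extra justification you correctly request in your closing paragraph is genuinely required if one takes it; the paper's ordering is the cheaper way to get the same result.
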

\begin{proof}
The statement of the stochastic Kelvin-Noether circulation theorem involves a loop that is moving with the stochastic flow. One can transform to stationary coordinates by pulling back the flow to the initial condition. This pull-back yields
\begin{equation}
I(t) = \oint_{\gamma_t}\frac{1}{\rho}\frac{\delta\ell}{\delta u} = \oint_{\gamma_0}g_t^*\left(\frac{1}{\rho}\frac{\delta\ell}{\delta u}\right) = \oint_{\gamma_0}\frac{1}{\rho_0}g_t^*\left(\frac{\delta\ell}{\delta u}\right).
\end{equation}
An application of the Kunita-It\^o-Wentzell formula \eqref{eq:kiwformula} leads to 
\begin{equation}
{\sf d}I(t) = \oint_{\gamma_0}\frac{1}{\rho_0}g_t^*\left({\sf d}\frac{\delta\ell}{\delta u} + \mathcal{L}_{{\sf d}\chi_t}\frac{\delta \ell}{\delta u}\right) = \oint_{\gamma_0}\frac{1}{\rho_0}g_t^*\left(\frac{\delta\ell}{\delta a}\diamond a\right)\,dt,
\end{equation}
since $u$ satisfies the stochastic Euler-Poincar\'e theorem. Transforming back to the moving coordinates by pushing forward with $g_t$ yields the final result.
\end{proof}

Thus, Theorem \ref{Thm:KelThm} explains how particle relabelling symmetry gives rise to the Kelvin-Noether circulation theorem via Noether's theorem. When the only advected quantity present is the mass density, the loop integral of the diamond terms vanishes. This means that circulation is conserved according to Noether's theorem for an incompressible fluid, or for a barotropically compressible fluid. The presence of other advected quantities breaks the symmetry further and introduces the  \emph{diamond terms} which generate circulation, as one can see in the Kelvin-Noether circulation theorem in equation \eqref{eqn:KelThm}. Consequently, the symmetry breaking due to additional order parameters can provide additional mechanisms for the generation of Kelvin-Noether circulation in ideal fluid dynamics. 

\paragraph*{Outlook.}
Stochastic geometric mechanics is an active field of mathematics which has recently established its utility for a broad range of applications in science. Basically, everything that can be done with Hamilton's principle for deterministic geometric mechanics can also be made stochastic in the sense of Stratonovich. This is possible because the variational calculus in Hamilton's principle requires only the product rule and chain rule from ordinary calculus. The happy emergence of the new science of stochastic geometric mechanics was celebrated with the publication of the book \citet{albeverio2017stochastic}. This book showcases some of the recent developments in stochastic geometric mechanics. Another collection of recent developments can be found in \citet{castrillon2019journal}. An ongoing development is in the direction of \emph{rough geometric mechanics}, initiated with a rough version of the Euler-Poincar\'e theorem in \citet{crisan2020variational}. Remarkably, variational principles which are driven by geometric rough paths again only require the product rule and the chain rule. Other directions involve the inclusion of jump processes, fractional derivatives and non-Markovian processes in geometric mechanics. For example, recent work by \citet{albeverio2020weak} shows that SDEs driven by semimartingales with jumps have weak symmetries and a corresponding extension of the reduction and reconstruction technique is discussed.

\section*{Acknowledgments}
We are enormously grateful for many encouraging discussions over the years with T.S. Ratiu, F. Gay-Balmaz, C. Tronci, S. Albeverio, A.B. Cruzeiro, F. Flandoli, and also with our friends in project STUOD (stochastic transport in upper ocean dynamics) and in the geometric mechanics research group at Imperial College London. 
The work of DDH was partially supported by European Research Council (ERC) Synergy
grant STUOD - DLV-856408. EL was supported by EPSRC grant [grant number EP/L016613/1] and is grateful for the warm hospitality at the Imperial College London EPSRC Centre for Doctoral Training in the Mathematics of Planet Earth during the course of this work.

\renewcommand\bibname{\sc References}
\bibliographystyle{plainnat}
\bibliography{biblio}
\end{document}